\newtheorem{theorem}{Theorem}
\newtheorem{lemma}{Lemma}
\newtheorem{remark}{Remark}
\newtheorem{assumption}{Assumption}
\newcommand{\beq}{\begin{equation}}
\newcommand{\eeq}{\end{equation}}
\newcommand{\beqa}{\begin{eqnarray}}
\newcommand{\eeqa}{\end{eqnarray}}	
\newcommand{\thetashare}{\theta_{\sf TX}}
\newcommand{\thetatrue}{\theta_0}
\newcommand{\T}{^\top}
\newcommand*{\QED}{\hfill\ensuremath{\square}}%
\title{Social Learning with Partial Information Sharing}
\name{Virginia Bordignon$^{\star}$ \qquad Vincenzo Matta$^{\dagger}$ \qquad Ali H. Sayed$^{\star}$\thanks{This work was supported in part by the Swiss National Science Foundation under grant 205121-184999.
 E-mails: virginia.bordignon@epfl.ch, vmatta@unisa.it, ali.sayed@epfl.ch}}
\address{$^{\star}$ School of Engineering, Ecole Polytechnique F\'ed\'erale de Lausanne (EPFL)\\$^{\dagger}$DIEM, University of Salerno}
\begin{document}
\ninept
\maketitle
\begin{abstract}
This work studies the learning abilities of agents sharing partial beliefs over social networks. The agents observe data that could have risen from one of several hypotheses and interact locally to decide whether the observations they are receiving have risen from a particular hypothesis of interest. To do so, we establish the conditions under which it is sufficient to share {\em partial} information about the agents' belief in relation to the hypothesis of interest. Some interesting convergence regimes arise.
\end{abstract}
\begin{keywords}
Social learning, partial information, Bayesian update.
\end{keywords}
\vspace{-5pt}
\section{Introduction and Related Work}
\vspace{-5pt}
\label{sec:intro}
Modeling opinion formation over social networks is a subject of great interest, including in modern times with the proliferation of social platforms. Many algorithmic approaches have been conceived for this purpose \cite{chamley2013models, acemoglu2011bayesian,molavi2018theory, friedkin2016network}, including the non-Bayesian approach, in which agents update their beliefs or opinions by using local streaming observations and by combining information shared by their neighbors. Some of the main studies along these lines rely on consensus and diffusion strategies \cite{degroot1974reaching,sayed2013diffusion}, both with linear and log-exponential belief combinations (see, e.g., \cite{jadbabaie2012non, zhao2012learning, shahrampour2015distributed,salami2017social,nedic2017fast,lalitha2018social}). In all of these works, it is assumed that agents share with their neighbors their {\em entire} belief vectors. 

In this work, we consider that agents share only {\em partial} beliefs. We motivate the model by means of an example. Consider a collection of agents connected by a topology and monitoring the weather conditions in a certain region of space. Each agent is measuring some local data that is dependent on the weather conditions, say, temperature, pressure, wind speed and direction. The agents wish to interact to decide whether the state of nature is one of three possibilities: sunny, rainy or snowing. At every time instant $i$, each agent $k$ will have a belief vector, say, $\mu_{k,i}(\theta)=[0.6, 0.2, 0.2]$. These values mean that, at time $i$, agent $k$ believes it is sunny with probability 0.6.  Similarly, all other agents in the network will have their own beliefs. If the weather conditions are sunny, we would like the agents to converge after sufficient iterations to the belief vector $[1,0,0]$. This type of convergence behavior has already been established in prior studies \cite{jadbabaie2012non, lalitha2018social} for strongly-connected networks (i.e., over networks where there is always a path and a reverse path between any two agents, in addition to at least one agent having a self-loop to reflect confidence in their local data). 

But what if we now formulate a different question for the network to answer? Assume the objective is for the network to decide ``whether it is sunny". Will the agents still need to share their {\em entire} belief vectors repeatedly to find out whether it is sunny or not? What if we devise a cooperation scheme where agents share only, at every iteration, the likelihood they have for the ``sunny" condition and ignore the other entries in their belief vector. Will they still be able to learn the state of nature? That is, can agents still learn if they only share partial information without needing to share simultaneously information about {\em all} possible hypotheses?

This work answers this question and provides conditions under which different interesting forms of convergence regimes can occur. To begin with, we present in Eqs. \eqref{eq:Bayesupdate}--\eqref{eq:combine_nc} below the learning algorithm by which agents share partial information in relation to the particular question that the network is trying to answer. We will then show that if the presumed hypothesis is the true one, then the network will be able to learn correctly even under the regime of partial information sharing. Otherwise, if there is sufficient Kullback-Leibler (KL) divergence between the true likelihood and the likelihood under investigation, then the network will be able to correctly discount the latter. However, and as would be expected, a mislearning scenario is possible if the KL divergence between true and presumed hypotheses is small enough, in which case the presumed hypothesis is confounded with the true one. The technical details, which are more involved than this brief intuitive summary, are spelled out in the sequel with illustrative examples and supporting simulations. Due to space limitations some proofs are omitted.

Theoretical results are provided for two alternatives: $i)$ the {\em partial} approach, where agents are so focused on the partial shared opinion that they segment also their own belief; $ii)$  and the \emph{self-aware partial} approach where each agent takes into account its own true belief vector. The analysis of both algorithms points toward interesting behavior, which complements our understanding of social learning systems. It also motivates the enunciation of a third algorithm, which will be merely introduced in this work: it consists of agents sharing their most relevant belief component at each time.

\vspace{-5pt}
\section{Partial information}
\vspace{-5pt}
\label{sec:slpartial}
In this work, we consider a strongly connected network of $N$ agents interacting to learn 
an unknown state of nature $\theta\in\Theta$. Each agent $k$ observes, at instant $i$, streaming data $\bm{\xi}_{k,i}\in\mathcal{X}$ (we use bold notation for random variables), which can be either continuous or discrete random variables assumed independent across time. 
The likelihood of the data corresponding to a given hypothesis $\theta$ will be denoted by $L(\xi|\theta)$, for $\xi\in\mathcal{X}$. 
The data $\bm{\xi}_{k,i}$ are distributed according to $L(\xi|\theta_0)$, where $\theta_0$ is the true hypothesis. 

%the true state of nature, denoted by $\thetatrue \in \Theta$ where $\Theta$ is a set of possible hypotheses. Each agent $k$ observes, at instant $i$, streaming data $\bm{\xi}_{k,i}$ generated from a distribution $L(\bm{\xi}_{k,i}|\thetatrue)$, which can be either a pdf or a pmf depending whether the variables are continuous and discrete. We also consider the case in which all agents employ the same family of likelihood functions, which is assumed to contain the true distribution $L(\bm{\xi}_{k,i}|\thetatrue)$.

We model the topology by means of a strongly-connected graph with a left-stochastic combination matrix $A$ whose nonnegative entries $a_{\ell k}$ are equal to zero if $\ell \notin \mathcal{N}_k$, with $\mathcal{N}_k$ being the set of neighbors of agent $k$. We denote by $v$ the Perron eigenvector of $A$, which is defined as
\begin{equation}
	Av=v,\qquad \mathbbm{1}\T v=1, \qquad v \succ 0.
\end{equation}
\begin{assumption}[Positive initial beliefs]\label{ass:2}
	In the absence of any prior information, all agents assign positive initial beliefs to all hypotheses, i.e., $\bm{\mu}_{k,0}(\theta)>0$ for each agent $k$ and all $\theta\in \Theta$.\QED
\end{assumption}
\begin{assumption}[Likelihood ratios]\label{ass:3}
	For each agent $k$ and $i=1,2,\dots$, the random variables $\log \left(L(\bm{\xi}_{k,i}|\theta)/L(\bm{\xi}_{k,i}|\theta')\right)$ are integrable for every pair $\{\theta, \theta'\}\in \Theta$.\QED
\end{assumption}
%To model partial information sharing, we consider two modified versions of the log-linear social learning algorithm used in \cite{lalitha2018social,matta2019exponential}. 
\subsection{Partial approach}
Let $\thetashare$ denote a hypothesis of interest\footnote{The subscript {\sf TX} denotes the ``transmitted" hypothesis.}.\,\,The objective for the agents is to verify whether the state of nature agrees with $\thetashare$ or not. Thus, we propose the following modified version of the log-linear social learning algorithm used in \cite{lalitha2018social,matta2019exponential}:
\begin{align}
\bm{\psi}_{k,i}(\theta)&=
\displaystyle{
	\frac
	{\bm{\mu}_{k,i-1}(\theta)L(\bm{\xi}_{k,i} | \theta)}
	{\sum\limits_{\theta^{\prime}\in\Theta} \bm{\mu}_{k,i-1}(\theta^{\prime})L(\bm{\xi}_{k,i} | \theta^{\prime})}
},
\label{eq:Bayesupdate}\\
\widehat{\bm{\psi}}_{k,i}(\theta)&=
\begin{cases}
\bm{\psi}_{k,i}(\theta),&\textnormal{for }\theta=\thetashare,\\
\displaystyle{\frac{1-\bm{\psi}_{k,i}(\thetashare)}{H-1}},&\textnormal{for }\theta\neq\thetashare,
\end{cases}
\label{eq:justone}\\
\bm{\mu}_{k,i}(\theta)&=
\frac{
	\exp\left\{
	\sum\limits_{\ell=1}^N a_{\ell k} \log \widehat{\bm{\psi}}_{\ell,i}(\theta)
	\right\}
}
{
		\sum\limits_{\theta^{\prime}\in\Theta} \exp\left\{\sum\limits_{\ell=1}^N a_{\ell k} \log \widehat{\bm{\psi}}_{\ell,i}(\theta^{\prime})
		\right\}
}.
\label{eq:combine}
\end{align}
In \eqref{eq:Bayesupdate}, the agents perform a Bayesian update to construct an intermediate belief vector $\bm{\psi}_{k,i}$. 
In order to obtain the final belief vector $\bm{\mu}_{k,i}$, we consider the usual log-linear combination rule seen in \eqref{eq:combine}. However, each agent $k$ shares only the component $\bm{\psi}_{k,i}(\thetashare)$. The receiving agents will then split the remaining mass $1-\bm{\psi}_{k,i}(\thetashare)$ uniformly across the values $\theta\neq\thetashare$. For this reason, the combination rule is applied to the modified belief vectors $\{\widehat{\bm{\psi}}_{\ell,i}\}_{\ell=1}^N$ in (3).
%In \eqref{eq:justone}, agent $k$ reshapes its intermediate belief vector $\bm{\psi}_{k,i}$, preserving truthfully only component $\bm{\psi}_{k,i}(\thetashare)$ for a deterministic choice of $\thetashare$, called transmitted hypothesis. In this approach, we consider the usual log-linear combination rule seen in \eqref{eq:combine} which results in the belief vector $\bm{\mu}_{k,i}$. This rule is applied to the modified belief vectors $\{\widehat{\bm{\psi}}_{\ell,i}\}^N_{\ell=1}$.
\vspace{-15pt}
\subsection{Self-aware partial approach}
The second approach consists in rewriting the combination step of the algorithm in such a way that agent $k$ combines its neighbors modified beliefs $\{\widehat{\bm{\psi}}_{\ell,i}\}^N_{\ell=1,\ell\neq k}$ with its own true belief $\bm{\psi}_{k,i}$:
\begin{equation}
\bm{\mu}_{k,i}(\theta)=
\frac{
	\exp\left\{
	a_{kk}\log\bm{\psi}_{k,i}(\theta)+\sum\limits_{\substack{\ell=1\\\ell\neq k}}^N a_{\ell k} \log \widehat{\bm{\psi}}_{\ell,i}(\theta)
	\right\}
}
{
	\sum\limits_{\theta^{\prime}\in\Theta} \exp\left\{a_{kk}\log\bm{\psi}_{k,i}(\theta^{\prime})+
		\sum\limits_{\substack{\ell=1\\\ell\neq k}}^Na_{\ell k}\log \widehat{\bm{\psi}}_{\ell,i}(\theta^{\prime})
		\right\}
}.
\label{eq:combine_nc}
\end{equation}
We can distinguish in \eqref{eq:combine_nc} two terms in its numerator: a first \emph{self-awareness} term and a second term combining the partially true beliefs from neighbors.
\vspace{-5pt}
\section{Convergence analysis}\label{sec:convan}
\vspace{-5pt}
\subsection{Partial approach}
	In this approach, agent $k$ chooses to combine its own modified belief vector $\widehat{\bm{\psi}}_{k,i}$ with its neighbors' equally modified beliefs. Lemma \ref{lemma:rate} states a result regarding the asymptotic rate of convergence while Theorem \ref{the:belcoll} establishes the convergence behavior of belief vectors, under different regimes. Note that $D_{\sf KL}[\cdot]$ refers to the Kullback-Leibler divergence. In the following, when computing KL divergences the argument $\xi$ of the pertinent distributions is omitted for ease of notation, and a.s. refers to almost sure convergence. Due to space constraints, proofs will be partially introduced.
	\begin{lemma}[Asymptotic rate of convergence]\label{lemma:rate}
		Under Assumptions \ref{ass:2} and \ref{ass:3}, we have that, for any $k$ and for all $\theta \neq \thetashare$:
		\begin{multline}
		\lim_{i\rightarrow\infty}\frac{1}{i}
		\log\frac{\bm{\mu}_{k,i}(\theta)}{\bm{\mu}_{k,i}(\thetashare)}\stackrel{\textnormal{a.s.}}{=}
		D_{\sf KL}[L(\thetatrue)||L(\thetashare)]\\- D_{\sf KL}[L(\thetatrue)||P(\thetashare^c)],\label{eq:lem1}
		\end{multline}
		where we define the average likelihood function distinct from $\theta$ as
		\begin{equation}
		P(\xi|\theta^{c})\triangleq\frac{1}{H-1}\sum_{\tau\neq\theta}L(\xi|\tau).
		\end{equation}
	\end{lemma}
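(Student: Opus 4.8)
\emph{Proof plan.} The cornerstone is a structural reduction that turns the $H$-ary inference problem into a binary one. By \eqref{eq:justone} the vector $\widehat{\bm{\psi}}_{\ell,i}(\cdot)$ assigns the common value $(1-\bm{\psi}_{\ell,i}(\thetashare))/(H-1)$ to every $\theta\neq\thetashare$; substituting this into the log-linear rule \eqref{eq:combine} forces $\bm{\mu}_{k,i}(\theta)$ to coincide across all $\theta\neq\thetashare$ for every $i\geq1$. Hence, for $i\geq1$, the belief of agent $k$ is carried by the single scalar $\bm{p}_{k,i}\triangleq\bm{\mu}_{k,i}(\thetashare)$, with $\bm{\mu}_{k,i}(\theta)=(1-\bm{p}_{k,i})/(H-1)$ for $\theta\neq\thetashare$. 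Under Assumptions \ref{ass:2}--\ref{ass:3} all beliefs stay strictly inside $(0,1)$ a.s., so the log-odds $\bm{\eta}_{k,i}\triangleq\log\bigl(\bm{p}_{k,i}/(1-\bm{p}_{k,i})\bigr)$ are finite; feeding this structured form of $\bm{\mu}_{k,i-1}$ into \eqref{eq:Bayesupdate} collapses the $H-1$ non-target hypotheses into the mixture $P(\cdot\,|\thetashare^{c})$ and gives $\log\bigl((1-\bm{\psi}_{\ell,i}(\thetashare))/\bm{\psi}_{\ell,i}(\thetashare)\bigr)=-\bm{\eta}_{\ell,i-1}-\bm{x}_{\ell,i}$, where $\bm{x}_{\ell,i}\triangleq\log\bigl(L(\bm{\xi}_{\ell,i}|\thetashare)/P(\bm{\xi}_{\ell,i}|\thetashare^{c})\bigr)$.

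Taking the $\theta$-vs-$\thetashare$ log-ratio in \eqref{eq:combine} and using left-stochasticity of $A$ (so the additive $\log(H-1)$ terms cancel), one obtains the affine recursion
\beq
\bm{\eta}_i = A\T\bm{\eta}_{i-1}+A\T\bm{x}_i,\qquad i\geq2,
\eeq
where $\bm{\eta}_i=[\bm{\eta}_{1,i},\dots,\bm{\eta}_{N,i}]\T$, $\bm{x}_i=[\bm{x}_{1,i},\dots,\bm{x}_{N,i}]\T$, and $\bm{\eta}_1$ is a finite a.s.\ initial vector. Because $\tfrac1i\log\bigl(\bm{\mu}_{k,i}(\theta)/\bm{\mu}_{k,i}(\thetashare)\bigr)=-\tfrac1i\bm{\eta}_{k,i}-\tfrac1i\log(H-1)$ and $\E_{\thetatrue}[\bm{x}_{k,i}]=D_{\sf KL}[L(\thetatrue)||P(\thetashare^{c})]-D_{\sf KL}[L(\thetatrue)||L(\thetashare)]$, the claim \eqref{eq:lem1} is equivalent to $\tfrac1i\bm{\eta}_{k,i}\to\E_{\thetatrue}[\bm{x}_{k,i}]$ a.s.\ for every $k$. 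Unrolling, $\bm{\eta}_i=(A\T)^{i-1}\bm{\eta}_1+\sum_{j=2}^{i}(A\T)^{i-j+1}\bm{x}_j$; the first term tends to $0$ a.s.\ (since $\bm{\eta}_1$ is a.s.\ finite and $\{(A\T)^m\}$ is bounded), and, by primitivity of $A$, $(A\T)^m=\mathbbm{1}v\T+\bm{E}_m$ with $\|\bm{E}_m\|\leq c\,\rho^{m}$ for some $c>0$, $\rho\in(0,1)$, so $\tfrac1i\sum_{j=2}^{i}(A\T)^{i-j+1}\bm{x}_j$ splits as $\mathbbm{1}v\T\bigl(\tfrac1i\sum_{j=2}^{i}\bm{x}_j\bigr)+\tfrac1i\sum_{j=2}^{i}\bm{E}_{i-j+1}\bm{x}_j$. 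The $\bm{x}_{\ell,i}$ are integrable by Assumption \ref{ass:3} (two-sided bounds pinch $P(\cdot\,|\thetashare^{c})$ between a single likelihood and $H-1$ times a single likelihood) and i.i.d.\ in $i$ with common mean $\E_{\thetatrue}[\bm{x}_{k,i}]$ across agents; the strong law of large numbers together with $\mathbbm{1}\T v=1$ then sends the first summand a.s.\ to $\mathbbm{1}\,\E_{\thetatrue}[\bm{x}_{k,i}]$.

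The one genuinely delicate point, and the main obstacle, is the a.s.\ decay of the transient term, whose norm is at most $\tfrac{c}{i}\sum_{m=1}^{i-1}\rho^{m}\|\bm{x}_{i-m+1}\|$ --- an exponentially-weighted tail of an i.i.d.\ sequence with finite mean. I would handle it by splitting at $m=\lfloor i/2\rfloor$: for $m\leq i/2$ the SLLN applied to $\|\bm{x}_j\|$ gives $\|\bm{x}_{i-m+1}\|=o(i)$ a.s., so that portion is eventually at most $\varepsilon\,c\sum_{m\geq1}\rho^{m}$; for $m>i/2$ one bounds $\rho^{m}\leq\rho^{i/2}$ and uses $\tfrac1i\sum_{j}\|\bm{x}_j\|=O(1)$ a.s., so that portion vanishes; letting $\varepsilon\downarrow0$ gives the a.s.\ limit $0$. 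Gathering the three contributions yields $\tfrac1i\bm{\eta}_{k,i}\to\E_{\thetatrue}[\bm{x}_{k,i}]$ a.s., hence $\tfrac1i\log\bigl(\bm{\mu}_{k,i}(\theta)/\bm{\mu}_{k,i}(\thetashare)\bigr)\to-\E_{\thetatrue}[\bm{x}_{k,i}]=D_{\sf KL}[L(\thetatrue)||L(\thetashare)]-D_{\sf KL}[L(\thetatrue)||P(\thetashare^{c})]$, which is precisely \eqref{eq:lem1}.
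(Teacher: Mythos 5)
Your proof is correct and follows essentially the same route as the paper: you exploit the equality of the non-transmitted beliefs after one iteration, reduce the dynamics to a linear recursion driven by the innovations $\log\bigl(L(\bm{\xi}_{\ell,i}|\thetashare)/P(\bm{\xi}_{\ell,i}|\thetashare^{c})\bigr)$, and average via the Perron eigenvector. The only difference is that you spell out explicitly (Perron decomposition, SLLN, and the exponentially-weighted transient estimate) the limiting argument that the paper delegates to the citation of \cite{nedic2017fast}.
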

	\begin{proof}[Proof of Lemma \ref{lemma:rate}]
		Note that \eqref{eq:combine} is equivalent to writing, for every pair $\theta, \theta'\in \Theta$:
		\begin{equation}
		\log \frac{\bm{\mu}_{k,i}(\theta)}{\bm{\mu}_{k,i}(\theta')} =\sum_{\ell=1}^{N}a_{\ell k}\log \frac{\widehat{\bm{\psi}}_{k,i}(\theta)}{\widehat{\bm{\psi}}_{k,i}(\theta')}.
		\label{eq:combine_2}
		\end{equation}
		We see from \eqref{eq:justone} that, for all $\theta,\theta^{\prime}\neq\thetashare$,
		\begin{equation}
		\log\frac{\bm{\mu}_{k,i}(\theta)}{\bm{\mu}_{k,i}(\theta^{\prime})}=0\Rightarrow
		\bm{\mu}_{k,i}(\theta)=\bm{\mu}_{k,i}(\theta^{\prime}).\label{eq:equalmu}
		\end{equation}
		Considering \eqref{eq:equalmu} and the recursion in \eqref{eq:combine_2} with $\theta'=\thetashare$ results in 
		\begin{equation}
		\log \frac{\bm{\mu}_{k,i}(\theta)}{\bm{\mu}_{k,i}(\thetashare)} =\sum_{\ell=1}^{N}a_{\ell k}\left[\log\frac{\bm{\mu}_{\ell,i-1}(\theta)}{\bm{\mu}_{\ell,i-1}(\thetashare)}+
		\log\frac{
			P(\bm{\xi}_{\ell,i}|\thetashare^{c})}
		{L(\bm{\xi}_{\ell,i} | \thetashare)}\right].\label{eq:recursionnosa}
		\end{equation}
		Multiplying by $1/i$, taking the limit over $i$ as it goes to infinity, and using limiting arguments as in \cite{nedic2017fast}, we get 
		\begin{align}
		\lim_{i\rightarrow\infty}\frac{1}{i}
		\log\frac{\bm{\mu}_{k,i}(\theta)}{\bm{\mu}_{k,i}(\thetashare)}\stackrel{\textnormal{a.s.}}{=}
		\underbrace{\sum_{\ell=1}^{N}v_\ell}_{=1}\,\mathbb{E}\left[\log\frac{
			P(\bm{\xi}_{\ell,i}|\thetashare^{c})}
		{L(\bm{\xi}_{\ell,i}| \thetashare)}\right],\label{eq:limthetathetax}
		\end{align}
		which establishes the desired conclusion.
	\end{proof}\vspace{-5pt}
	\begin{theorem}[Belief collapse]\label{the:belcoll}
		Under Assumptions \ref{ass:2} and \ref{ass:3}, for any $k$, we have that:
		
		\noindent If $\thetashare = \thetatrue$, 
			\begin{equation}
			\hspace{5pt}D_{\sf KL}\left[L(\thetatrue)||P(\thetatrue^c)\right]>0\Rightarrow\lim_{i\rightarrow\infty}\bm{\mu}_{k,i}(\thetatrue)\stackrel{\textnormal{a.s.}}{=}1.\label{eq:th1}
			\end{equation}
			
		\noindent If conversely $\thetashare\neq\theta_0$, 
			\begin{equation}
			\hspace{10pt}\frac{D_{\sf KL}[L(\thetatrue)||P(\thetashare^c)]}{D_{\sf KL}[L(\thetatrue)||L(\thetashare)]}>1\Rightarrow\lim_{i\rightarrow\infty}\bm{\mu}_{k,i}(\thetashare)\stackrel{\textnormal{a.s.}}{=}1,\label{eq:cond1}
			\end{equation}
			and
				\begin{equation}
				\frac{D_{\sf KL}[L(\thetatrue)||P(\thetashare^c)]}{D_{\sf KL}[L(\thetatrue)||L(\thetashare)]}<1\Rightarrow\lim_{i\rightarrow\infty}\bm{\mu}_{k,i}(\theta)\stackrel{\textnormal{a.s.}}{=}\frac{1}{H-1},\label{eq:cond2}
				\end{equation}
for all $\theta \neq \thetashare$.
	\end{theorem}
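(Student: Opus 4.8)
The plan is to leverage Lemma~\ref{lemma:rate} directly, since it already pins down the exponential rate of the log-ratio $\log[\bm{\mu}_{k,i}(\theta)/\bm{\mu}_{k,i}(\thetashare)]$ for every $\theta\neq\thetashare$. Write $r \triangleq D_{\sf KL}[L(\thetatrue)\|L(\thetashare)] - D_{\sf KL}[L(\thetatrue)\|P(\thetashare^c)]$ for the almost-sure limit in \eqref{eq:lem1}. The key structural observation from \eqref{eq:equalmu} is that $\bm{\mu}_{k,i}(\theta)$ takes a common value, say $\bm{\beta}_{k,i}$, for all $\theta\neq\thetashare$, so that the belief vector at agent $k$ is completely described by the single scalar $\bm{\mu}_{k,i}(\thetashare)$ together with the normalization $\bm{\mu}_{k,i}(\thetashare) + (H-1)\bm{\beta}_{k,i} = 1$. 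Hence everything reduces to the behavior of the scalar log-ratio $\bm{x}_{k,i}\triangleq \log[\bm{\beta}_{k,i}/\bm{\mu}_{k,i}(\thetashare)]$, whose normalized limit is $r$.

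Next I would translate the sign of $r$ into belief collapse via elementary algebra. From $\bm{\mu}_{k,i}(\thetashare) = 1/(1+(H-1)e^{\bm{x}_{k,i}})$ and $\bm{\beta}_{k,i} = e^{\bm{x}_{k,i}}/(1+(H-1)e^{\bm{x}_{k,i}})$, the case $r<0$ forces $\bm{x}_{k,i}\to -\infty$ a.s.\ (because $\bm{x}_{k,i}/i \to r<0$), hence $\bm{\mu}_{k,i}(\thetashare)\to 1$ and each $\bm{\mu}_{k,i}(\theta)\to 0$; the case $r>0$ forces $\bm{x}_{k,i}\to +\infty$ a.s., hence $\bm{\mu}_{k,i}(\thetashare)\to 0$ and $\bm{\mu}_{k,i}(\theta)\to 1/(H-1)$ for every $\theta\neq\thetashare$. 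It remains to identify when each sign occurs. When $\thetashare=\thetatrue$, the first KL term $D_{\sf KL}[L(\thetatrue)\|L(\thetatrue)]=0$, so $r = -D_{\sf KL}[L(\thetatrue)\|P(\thetatrue^c)] < 0$ exactly under the stated hypothesis \eqref{eq:th1}, giving $\bm{\mu}_{k,i}(\thetatrue)\to 1$. When $\thetashare\neq\thetatrue$, both KL terms are finite (Assumption~\ref{ass:3}) and $D_{\sf KL}[L(\thetatrue)\|L(\thetashare)]>0$ since the likelihoods differ, so dividing $r$ by this positive quantity shows $r<0 \iff \eqref{eq:cond1}$ and $r>0 \iff \eqref{eq:cond2}$, which yields the two claimed limits.

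One point that needs a little care, rather than a genuine obstacle, is passing from convergence of the normalized log-ratio $\bm{x}_{k,i}/i$ to divergence of $\bm{x}_{k,i}$ itself: $\bm{x}_{k,i}/i \to r$ with $r\neq 0$ indeed implies $\bm{x}_{k,i}\to \operatorname{sign}(r)\cdot\infty$, so the implication is immediate and no separate argument is needed. The only real subtlety is the boundary behaviour when $r=0$ (the ratio in \eqref{eq:cond1}--\eqref{eq:cond2} equals one), which the theorem deliberately excludes; I would simply note that this knife-edge case is not covered, mirroring the strict inequalities in the statement. I would also remark that the convergence is simultaneous across all agents $k$ because the limit $r$ in Lemma~\ref{lemma:rate} does not depend on $k$, so the whole network collapses to the same belief vector.
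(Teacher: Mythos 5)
Your proposal is correct and takes essentially the same route as the paper: it invokes Lemma~\ref{lemma:rate}, determines the sign of the limiting rate in each case, and combines it with the equal-split structure \eqref{eq:equalmu} and normalization to obtain the limiting beliefs. Your explicit parametrization of the belief vector by the single log-ratio merely spells out in more detail what the paper's short proof asserts.
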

\begin{proof}[Proof of Theorem \ref{the:belcoll}]
	 When $\thetashare=\thetatrue$, if the inequality in \eqref{eq:th1} holds, the convergence rate from Lemma \ref{lemma:rate} will be strictly negative, which implies that, since $\bm{\mu}_{k,i}(\theta_0)$ is bounded by 1, for any $\theta\neq \thetatrue$, $\lim_{i\rightarrow\infty}\bm{\mu}_{k,i}(\theta)\stackrel{\textnormal{a.s.}}{=}0$, resulting in \eqref{eq:th1}. When $\thetashare\neq\thetatrue$, the convergence behavior will depend on the sign of the RHS of \eqref{eq:lem1}. If the inequality in \eqref{eq:cond1} holds, the sign is negative and we get the convergence in \eqref{eq:cond1}. If the inequality in \eqref{eq:cond2} holds, the sign is positive, which implies that $\bm{\mu}_{k,i}(\thetashare)$ goes to zero, and all $\bm{\mu}_{k,i}(\theta)$ are equal in view of \eqref{eq:equalmu}.
\end{proof}\vspace{-5pt}
\setcounter{figure}{1}
\begin{figure*}
	\includegraphics[width=\textwidth]{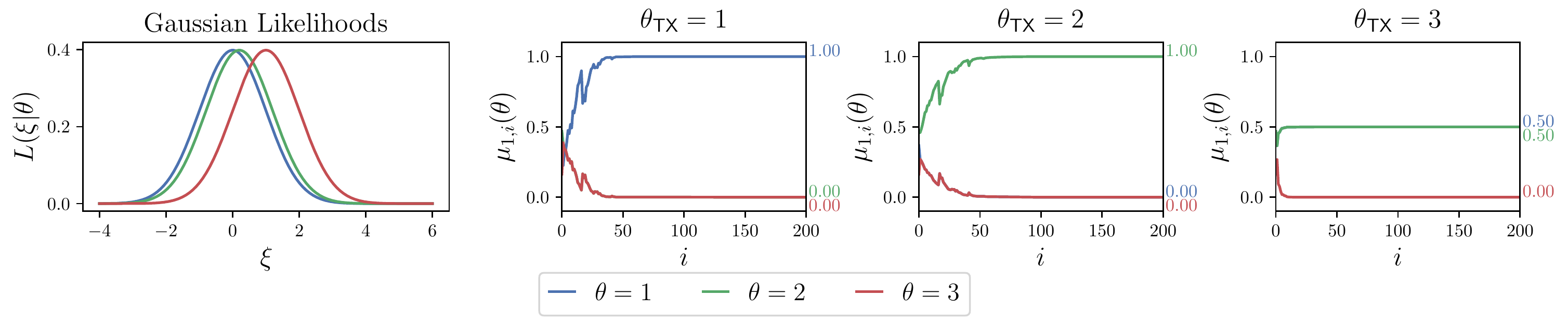}
	\caption{\emph{Partial approach}: Belief evolution for agent 1 for different choices of $\thetashare$.}
	\label{fig:na}	
\end{figure*}
	Theorem \ref{the:belcoll} reveals that when the transmitted hypothesis $\thetashare$ is equal to the true one, all agents are able to recover the truth. Moreover, in the alternative scenario, two situations can be produced: either all agents will converge to a belief concentrated at the hypothesis $\thetashare$, or they will converge to a uniform splitting across the non-transmitted hypotheses. This behavior admits a very useful interpretation. When \eqref{eq:cond1} is verified, the transmitted hypothesis is more easily confounded with the true one (since their KL divergence is small). In comparison, when \eqref{eq:cond2} is verified, there is no sufficient evidence for choosing $\thetashare$, but since in view of \eqref{eq:equalmu} all the unshared beliefs remain equal, the only meaningful choice of the agents is to split equally their beliefs across the remaining $H-1$ hypotheses.
\vspace{-5pt}
\subsection{Self-aware partial approach}
Now, we consider that agent $k$ will combine its true belief vector $\bm{\psi}_{k,i}$ with its neighbors' modified belief vectors $\{\widehat{\bm{\psi}}_{\ell,i}\}^N_{\ell=1,\ell\neq k}$. When $\thetashare=\thetatrue$, we have the result stated in Theorem \ref{th:truthselfaware}. For this approach, proofs are omitted due to space constraints.
\begin{theorem}[Truth learning with self-awareness]\label{th:truthselfaware}
	Under Assumptions \ref{ass:2} and \ref{ass:3}, when $\thetashare=\thetatrue$, if $D_{\sf KL}\left[ L(\thetatrue)||P_q(\thetatrue^c)\right]\neq 0$ for any pmf $q(\cdot)$ over $\Theta\backslash\thetatrue$ such that we define
	 \begin{equation}
	P_q(\xi|\thetatrue^c)\triangleq \sum_{\tau\neq \thetatrue}q(\tau)L(\xi|\tau),
	\end{equation}
	then for any $k$, we have 
	\begin{equation}
	\lim_{i \rightarrow \infty}\bm{\mu}_{k,i}(\theta_0)\stackrel{\textnormal{a.s.}}{=}1.
	\end{equation}\qed
\end{theorem}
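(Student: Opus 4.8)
The plan is to run the same style of argument as in the proof of Lemma~\ref{lemma:rate}, but for the self-aware recursion~\eqref{eq:combine_nc}, by tracking the single scalar $\bm{r}_{k,i}\triangleq\log\!\big[(1-\bm{\mu}_{k,i}(\thetatrue))/\bm{\mu}_{k,i}(\thetatrue)\big]$. Since $\bm{\mu}_{k,i}(\thetatrue)\in(0,1]$, the claim $\bm{\mu}_{k,i}(\thetatrue)\to 1$ a.s.\ is equivalent to $\bm{r}_{k,i}\to-\infty$ a.s.\ for every $k$.

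The first step is algebraic. Evaluating~\eqref{eq:combine_nc} at $\theta=\thetatrue$ and summing the ratios $\bm{\mu}_{k,i}(\theta)/\bm{\mu}_{k,i}(\thetatrue)$ over $\theta\neq\thetatrue$, one uses that---because $\thetashare=\thetatrue$---the modified component $\widehat{\bm{\psi}}_{\ell,i}(\theta)=(1-\bm{\psi}_{\ell,i}(\thetatrue))/(H-1)$ of every neighbor $\ell\neq k$ does not depend on $\theta\neq\thetatrue$ and hence factors out, giving
\[
\frac{1-\bm{\mu}_{k,i}(\thetatrue)}{\bm{\mu}_{k,i}(\thetatrue)}=\frac{\sum_{\theta\neq\thetatrue}\bm{\psi}_{k,i}(\theta)^{a_{kk}}}{\bm{\psi}_{k,i}(\thetatrue)^{a_{kk}}}\,\prod_{\ell\neq k}\left(\frac{1-\bm{\psi}_{\ell,i}(\thetatrue)}{(H-1)\,\bm{\psi}_{\ell,i}(\thetatrue)}\right)^{a_{\ell k}}.
\]
Bounding the self term by concavity of $t\mapsto t^{a_{kk}}$, i.e.\ $\sum_{\theta\neq\thetatrue}\bm{\psi}_{k,i}(\theta)^{a_{kk}}\le (H-1)^{1-a_{kk}}(1-\bm{\psi}_{k,i}(\thetatrue))^{a_{kk}}$, the powers of $H-1$ cancel (since $\sum_{\ell\neq k}a_{\ell k}=1-a_{kk}$); substituting the Bayesian step~\eqref{eq:Bayesupdate} and writing $\bm{b}_{\ell,i-1}$ for the normalisation of $\bm{\mu}_{\ell,i-1}$ to a pmf over $\Theta\setminus\thetatrue$, this collapses to
\[
\bm{r}_{k,i}\;\le\;\sum_{\ell=1}^N a_{\ell k}\big(\bm{r}_{\ell,i-1}+\bm{w}_{\ell,i}\big),\qquad \bm{w}_{\ell,i}\triangleq\log\frac{P_{\bm{b}_{\ell,i-1}}(\bm{\xi}_{\ell,i}|\thetatrue^c)}{L(\bm{\xi}_{\ell,i}|\thetatrue)},
\]
i.e.\ the analogue of~\eqref{eq:recursionnosa} with the data-dependent mixing pmf $\bm{b}_{\ell,i-1}$ in place of the uniform one---precisely the object $P_q$ flagged in the statement.

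Then I would take limits as in Lemma~\ref{lemma:rate}. The sequence $\{\bm{w}_{\ell,i}\}$ is adapted to the natural filtration $\mathcal{F}_i$ of the observations, is integrable uniformly in $i$ by Assumption~\ref{ass:3}, and satisfies $\E[\bm{w}_{\ell,i}\mid\mathcal{F}_{i-1}]=-D_{\sf KL}[L(\thetatrue)||P_{\bm{b}_{\ell,i-1}}(\thetatrue^c)]\le -\delta$, where $\delta\triangleq\min_q D_{\sf KL}[L(\thetatrue)||P_q(\thetatrue^c)]>0$: the minimum over the compact simplex of pmfs $q$ over $\Theta\setminus\thetatrue$ is attained because $q\mapsto D_{\sf KL}[L(\thetatrue)||P_q(\thetatrue^c)]$ is lower semicontinuous, and it is strictly positive precisely by the theorem's hypothesis. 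Dominating $\bm{r}_i$ by the linear iteration $\bm{u}_i=A^{\top}(\bm{u}_{i-1}+\bm{w}_i)$ with $\bm{u}_0=\bm{r}_0$ (so that $\bm{r}_{k,i}\le\bm{u}_{k,i}$ pathwise), and combining the geometric mixing $(A^{\top})^n\to\mathbbm{1}v^{\top}$ (primitivity of $A$) with the Kolmogorov strong law for the i.i.d.\ log-likelihood moments---the ``limiting argument'' already invoked for Lemma~\ref{lemma:rate}---one gets $\tfrac1i\bm{u}_{k,i}-\tfrac1i\sum_{j\le i}\sum_{\ell}v_\ell\bm{w}_{\ell,j}\to0$ a.s. Decomposing $\sum_\ell v_\ell\bm{w}_{\ell,j}$ into a martingale difference---whose averaged sum $\tfrac1i\sum_{j\le i}(\cdot)$ vanishes a.s.\ by the martingale strong law---plus its conditional mean---whose averaged sum is $\le-\delta$---I conclude $\limsup_i\tfrac1i\bm{u}_{k,i}\le-\delta<0$, hence $\bm{r}_{k,i}\le\bm{u}_{k,i}\to-\infty$, hence $\bm{\mu}_{k,i}(\thetatrue)\to1$ a.s., for every $k$.

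I expect the crux to be exactly this last passage: unlike in Lemma~\ref{lemma:rate}, the per-neighbor drift $D_{\sf KL}[L(\thetatrue)||P_{\bm{b}_{\ell,i-1}}(\thetatrue^c)]$ is random and path dependent through $\bm{b}_{\ell,i-1}$, so the plain i.i.d.\ strong law no longer applies and one must argue conditionally; it is precisely the hypothesis ``$D_{\sf KL}[L(\thetatrue)||P_q(\thetatrue^c)]\neq0$ for every pmf $q$'', through compactness, that upgrades this into the uniform bound $\delta>0$ that makes the martingale argument effective. A minor technicality is that the martingale strong law asks for slightly more than the mere $L^1$-integrability of Assumption~\ref{ass:3} (a finite second moment, or else a truncation of $\bm{w}_{\ell,i}$), which is dealt with by the standard devices of~\cite{nedic2017fast,lalitha2018social}.
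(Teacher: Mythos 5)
The paper omits its own proof of Theorem~\ref{th:truthselfaware} (``proofs are omitted due to space constraints''), so a line-by-line comparison is impossible; judged on its own, your argument is correct and is the natural self-aware extension of the proof of Lemma~\ref{lemma:rate}, and the way you use the hypothesis---turning ``$D_{\sf KL}[L(\thetatrue)||P_q(\thetatrue^c)]\neq 0$ for every pmf $q$'' into a uniform drift bound $\delta=\min_q D_{\sf KL}[L(\thetatrue)||P_q(\thetatrue^c)]>0$ via compactness of the simplex---is exactly what the appearance of the arbitrary mixture $P_q$ in the statement suggests the authors intended. Your algebra checks out: the neighbor factor is $\theta$-independent because $\thetashare=\thetatrue$, the powers of $H-1$ cancel since $A$ is left-stochastic ($a_{kk}+\sum_{\ell\neq k}a_{\ell k}=1$), and the one-sided inequality produced by the concavity bound on the self term is enough because only $\bm{r}_{k,i}\to-\infty$ is needed; the conditional-mean identity $\E[\bm{w}_{\ell,i}\mid\mathcal{F}_{i-1}]=-D_{\sf KL}[L(\thetatrue)||P_{\bm{b}_{\ell,i-1}}(\thetatrue^c)]$ is also right since $\bm{b}_{\ell,i-1}$ is $\mathcal{F}_{i-1}$-measurable and $\bm{\xi}_{\ell,i}\sim L(\cdot|\thetatrue)$ is independent of the past. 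One point worth making explicit, since it underwrites two steps you treat as ``standard'': from $\min_{\tau\neq\thetatrue}L(\xi|\tau)\leq P_q(\xi|\thetatrue^c)\leq\max_{\tau\neq\thetatrue}L(\xi|\tau)$ you get the pathwise envelope $|\bm{w}_{\ell,i}|\leq\max_{\tau\neq\thetatrue}\bigl|\log\bigl(L(\bm{\xi}_{\ell,i}|\tau)/L(\bm{\xi}_{\ell,i}|\thetatrue)\bigr)\bigr|$, an i.i.d.\ (in $i$) integrable sequence under Assumption~\ref{ass:3}; this is what justifies both the control of the geometric-mixing error term in the comparison iterate $\bm{u}_i$ (the analogue of the ``limiting arguments'' invoked after \eqref{eq:recursionnosa}) and the truncation route to the martingale strong law that you correctly flag as needing more than plain $L^1$-integrability. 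With that envelope spelled out, the proof is complete at the same level of rigor as the paper's proof of Lemma~\ref{lemma:rate}.
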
\vspace{-5pt}
On the other hand, when $\thetashare\neq \thetatrue$, two situations may arise: the belief concerning the transmitted hypothesis will collapse either to zero or to one, under the sufficient conditions given in Lemmas \ref{lem:sufcond0} and \ref{lem:sufcond2}, respectively. 
\begin{lemma}[Alternative behavior with self-awareness]\label{lem:sufcond0}
	Under Assumption \ref{ass:2} and \ref{ass:3}, when $\thetashare\neq \theta_0$, for any $k$, if 
	\begin{equation}
	D_{\sf KL}[L(\theta_0)||
	L(\theta_{\sf TX})] >\frac{\alpha}{H-1}\sum_{\tau\neq \thetashare}D_{\sf KL}[L(\theta_0)||L(\tau) ],
	\end{equation}
	then
	\begin{equation}
	\lim_{i \rightarrow \infty}\bm{\mu}_{k,i}(\thetashare)\stackrel{\textnormal{a.s.}}{=} 0,
	\end{equation}
	where we defined $\displaystyle{\alpha\triangleq \sum_{\ell =1}^{N}v_\ell\sum_{\substack{n=1\\n\neq \ell}}^N\frac{a_{n \ell}}{1-a_{n n}}}$.\qed
\end{lemma}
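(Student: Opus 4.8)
The plan is to reduce the claim to the divergence of a single scalar per agent. Put $\bm{S}_{k,i}\triangleq\frac{1}{H-1}\sum_{\theta\neq\thetashare}\log\frac{\bm{\mu}_{k,i}(\theta)}{\bm{\mu}_{k,i}(\thetashare)}$. Since the arithmetic mean dominates the geometric mean, $\sum_{\theta\neq\thetashare}\frac{\bm{\mu}_{k,i}(\theta)}{\bm{\mu}_{k,i}(\thetashare)}\geq e^{\bm{S}_{k,i}}$, hence $\bm{\mu}_{k,i}(\thetashare)=\big(1+\sum_{\theta\neq\thetashare}\bm{\mu}_{k,i}(\theta)/\bm{\mu}_{k,i}(\thetashare)\big)^{-1}\leq(1+e^{\bm{S}_{k,i}})^{-1}$, so it suffices to prove $\bm{S}_{k,i}\to+\infty$ almost surely for every $k$.

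First I would write \eqref{eq:combine_nc} in log-ratio form: for $\theta\neq\thetashare$, using \eqref{eq:Bayesupdate} for the self term,
\[
\log\frac{\bm{\mu}_{k,i}(\theta)}{\bm{\mu}_{k,i}(\thetashare)}=a_{kk}\!\left[\log\frac{\bm{\mu}_{k,i-1}(\theta)}{\bm{\mu}_{k,i-1}(\thetashare)}+\log\frac{L(\bm{\xi}_{k,i}|\theta)}{L(\bm{\xi}_{k,i}|\thetashare)}\right]+\sum_{\ell\neq k}a_{\ell k}\log\frac{\widehat{\bm{\psi}}_{\ell,i}(\theta)}{\widehat{\bm{\psi}}_{\ell,i}(\thetashare)} .
\]
By \eqref{eq:justone} the neighbour term equals $\log\frac{1-\bm{\psi}_{\ell,i}(\thetashare)}{(H-1)\bm{\psi}_{\ell,i}(\thetashare)}$ and is \emph{independent of} $\theta\neq\thetashare$, so averaging the identity over $\theta\neq\thetashare$ yields
\[
\bm{S}_{k,i}=a_{kk}\big(\bm{S}_{k,i-1}+\bm{n}_{k,i}\big)+\sum_{\ell\neq k}a_{\ell k}\log\frac{1-\bm{\psi}_{\ell,i}(\thetashare)}{(H-1)\bm{\psi}_{\ell,i}(\thetashare)},\qquad \bm{n}_{k,i}\triangleq\frac{1}{H-1}\sum_{\theta\neq\thetashare}\log\frac{L(\bm{\xi}_{k,i}|\theta)}{L(\bm{\xi}_{k,i}|\thetashare)},
\]
with $\{\bm{n}_{k,i}\}_i$ i.i.d.\ and integrable by Assumption \ref{ass:3}.

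The key step, and the one I expect to be the main obstacle, is to lower bound the neighbour terms without introducing an intractable object. The naive route rewrites $\log\frac{1-\bm{\psi}_{\ell,i}(\thetashare)}{(H-1)\bm{\psi}_{\ell,i}(\thetashare)}$ through the random, time-varying ``splitting'' distribution $q_{\ell,i-1}(\tau)\propto\bm{\mu}_{\ell,i-1}(\tau)$, $\tau\neq\thetashare$, after which one must control the Ces\`aro average of $\sum_{\tau}q_{\ell,j-1}(\tau)D_{\sf KL}[L(\thetatrue)||L(\tau)]$, which is delicate since $q_{\ell,j-1}$ does not converge. Instead I would apply AM--GM to the $H-1$ positive numbers $\bm{\mu}_{\ell,i-1}(\tau)L(\bm{\xi}_{\ell,i}|\tau)$, $\tau\neq\thetashare$, that form the numerator $\sum_{\tau\neq\thetashare}\bm{\mu}_{\ell,i-1}(\tau)L(\bm{\xi}_{\ell,i}|\tau)$ of $1-\bm{\psi}_{\ell,i}(\thetashare)$; after the Bayesian normalisation cancels, this gives
\[
\log\frac{1-\bm{\psi}_{\ell,i}(\thetashare)}{(H-1)\bm{\psi}_{\ell,i}(\thetashare)}\;\geq\;\frac{1}{H-1}\sum_{\tau\neq\thetashare}\log\frac{\bm{\mu}_{\ell,i-1}(\tau)}{\bm{\mu}_{\ell,i-1}(\thetashare)}+\bm{n}_{\ell,i}\;=\;\bm{S}_{\ell,i-1}+\bm{n}_{\ell,i},
\]
i.e.\ it replaces the troublesome average by the fixed, uniform one $\frac{1}{H-1}\sum_{\tau}D_{\sf KL}[L(\thetatrue)||L(\tau)]$ at the price of a \emph{one-sided} inequality, which is exactly the direction needed.

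Substituting this into the recursion for $\bm{S}_{k,i}$ gives the componentwise inequality $\bm{S}_i\succeq A\T(\bm{S}_{i-1}+\bm{n}_i)$ for the stacked vectors; iterating (nonnegativity of $A\T$ preserves the inequality) gives $\bm{S}_i\succeq(A\T)^i\bm{S}_0+\sum_{j=1}^{i}(A\T)^{i-j+1}\bm{n}_j$. Dividing by $i$, letting $i\to\infty$, and invoking the limiting arguments used for Lemma \ref{lemma:rate} (namely $(A\T)^m\to\mathbbm{1}v\T$ geometrically fast by strong connectivity and aperiodicity, $\|\bm{n}_j\|/j\to 0$ a.s.\ by Assumption \ref{ass:3}, and the strong law applied to the i.i.d.\ sequence $v\T\bm{n}_j$), I would obtain, for every $k$,
\[
\liminf_{i\to\infty}\frac{1}{i}\,\bm{S}_{k,i}\;\stackrel{\textnormal{a.s.}}{\geq}\;\sum_{\ell=1}^{N}v_\ell\,\E[\bm{n}_{\ell,i}]\;=\;D_{\sf KL}[L(\thetatrue)||L(\thetashare)]-\frac{1}{H-1}\sum_{\tau\neq\thetashare}D_{\sf KL}[L(\thetatrue)||L(\tau)].
\]
It then remains to observe that the constant $\alpha$ in the statement equals $1$: swapping the order of summation and using $Av=v$ together with $a_{nn}<1$ for all $n$ (strong connectivity), $\alpha=\sum_{n}\frac{1}{1-a_{nn}}\sum_{\ell\neq n}v_\ell a_{n\ell}=\sum_{n}\frac{v_n(1-a_{nn})}{1-a_{nn}}=1$. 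Hence the hypothesis of the lemma says precisely that the liminf above is strictly positive, so $\bm{S}_{k,i}\to+\infty$ a.s.\ and, by the first paragraph, $\bm{\mu}_{k,i}(\thetashare)\to 0$ a.s., for every $k$.
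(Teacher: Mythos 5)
Your argument is correct, and since the paper omits its own proof of Lemma \ref{lem:sufcond0} I can only compare against the route its statement suggests. Each step checks out: the log-ratio form of \eqref{eq:combine_nc} with the exact self term and the $\theta$-independent neighbour term from \eqref{eq:justone}; the AM--GM bound $\log\frac{1-\bm{\psi}_{\ell,i}(\thetashare)}{(H-1)\bm{\psi}_{\ell,i}(\thetashare)}\geq \bm{S}_{\ell,i-1}+\bm{n}_{\ell,i}$ (the Bayesian normalization indeed cancels in the ratios); the iteration $\bm{S}_i\succeq (A\T)^i\bm{S}_0+\sum_{j=1}^i (A\T)^{i-j+1}\bm{n}_j$, legitimate because $A\T$ is entrywise nonnegative; and the asymptotics, which are invoked at exactly the level of rigor the paper itself uses for Lemma \ref{lemma:rate} (primitivity from strong connectivity with a self-loop, SLLN for the integrable i.i.d.\ $\bm{n}_j$, geometric decay of $(A\T)^m-\mathbbm{1}v\T$). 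Your identity $\alpha=1$ is also correct under the paper's conventions: interchanging sums and using $Av=v$ with $a_{nn}<1$ (forced by strong connectivity for $N\geq 2$) gives $\sum_{\ell\neq n}a_{n\ell}v_\ell=v_n(1-a_{nn})$ and hence $\alpha=\sum_n v_n=1$, so your sufficient condition with constant $1$ is literally the condition in the lemma; this in fact strengthens Remark \ref{rem:2}, which asserts $\alpha=1$ only for the averaging rule. Where you likely diverge from the authors' (omitted) derivation: the form of $\alpha$, with its $1/(1-a_{nn})$ factors, and the oscillatory self-recursion \eqref{eq:recursion} suggest they analyzed the non-transmitted components per agent and unrolled the self-loop geometric series $\sum_m a_{nn}^m$ before averaging with the Perron weights, which naturally produces $\alpha$ as a bound constant. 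Your approach instead collapses everything into a single linear recursion for the averaged log-ratio $\bm{S}_{k,i}$, avoiding the time-varying splitting weights $q_{\ell,i-1}$ entirely; the payoff is a shorter, self-contained proof and the observation that the constant in the sufficient condition is exactly $1$ for every strongly connected left-stochastic $A$. The only cosmetic additions I would make are an explicit induction that beliefs stay strictly positive (Assumption \ref{ass:2} plus positive likelihoods, implicit in Assumption \ref{ass:3}), so that the logarithms and the AM--GM step are well defined, and a sentence spelling out the split-sum argument behind $\frac{1}{i}\sum_{j=1}^i\rho^{\,i-j+1}\|\bm{n}_j\|\to 0$; neither affects correctness.
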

Under the sufficient condition in Lemma \ref{lem:sufcond0}, we discover that the belief concerning $\thetashare$ collapses to zero with probability one. As for the remaining elements, it can be shown that they will exhibit an oscillatory behavior according to the recursion
\begin{equation}
\log\frac{\bm{\mu}_{k,i}(\theta)}{\bm{\mu}_{k,i}(\theta')}=a_{kk}\log\frac{\bm{\mu}_{k,i-1}(\theta)}{\bm{\mu}_{k,i-1}(\theta')}+a_{kk}\log\frac{L(\bm{\xi}_{k,i}|\theta)}{L(\bm{\xi}_{k,i}|\theta')}\label{eq:recursion}
\end{equation}
for $\theta, \theta'\neq \thetashare$. This scenario differs from the partial approach, where the above recursion is given instead by \eqref{eq:equalmu}, where the beliefs for non-transmitted hypotheses evolve equally.

Before presenting Lemma \ref{lem:sufcond2}, we introduce an extra assumption on the boundedness of the likelihood functions.
\begin{assumption}[Bounded likelihoods]\label{ass:5}
	Let there be $M>0$, such that, for all $\xi\in\mathcal{X}$ and for any pair $\theta,\theta^{\prime}\neq\thetashare$, we have:
\begin{equation}
-M\leq \log \frac{L(\xi|\theta)}{L(\xi|\theta^{\prime})}\leq M.\label{eq:boundl}
\end{equation}\QED
\end{assumption}\vspace{-5pt}

\begin{lemma}[Mislearning with self-awareness]\label{lem:sufcond2}
	Under Assumptions \ref{ass:2}, \ref{ass:3}  and \ref{ass:5}, for any $k$, when $\thetashare\neq \theta_0$, if
	\begin{multline}
	D_{\sf KL}[L(\thetatrue)||P(\thetashare^c)] > D_{\sf KL}[L(\thetatrue)||L(\thetashare)]
	\\+
	M\sum_{\ell =1}^Nv_{\ell}\sum_{\substack{n=1\\n\neq \ell}}^N a_{n \ell}\frac{a_{n n}}{1-a_{nn}},\label{eq:loose}
	\end{multline}
	then
	\begin{equation}
	\lim_{i\rightarrow\infty} \bm{\mu}_{k,i}(\thetashare)\stackrel{\textnormal{a.s.}}{=}1.
	\end{equation}\qed
\end{lemma}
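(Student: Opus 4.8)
The plan is to prove that for each $\theta\neq\thetashare$ the random log-ratio $\bm{x}_{k,i}(\theta)\triangleq\log\big(\bm{\mu}_{k,i}(\theta)/\bm{\mu}_{k,i}(\thetashare)\big)$ diverges to $-\infty$ almost surely; since $\Theta$ is finite this forces $\bm{\mu}_{k,i}(\thetashare)\to1$. Forming the ratio of the $\theta$- and $\thetashare$-components of \eqref{eq:combine_nc} and using \eqref{eq:Bayesupdate}--\eqref{eq:justone} gives, for $\theta\neq\thetashare$,
\[
\bm{x}_{k,i}(\theta)=a_{kk}\Big[\bm{x}_{k,i-1}(\theta)+\log\tfrac{L(\bm{\xi}_{k,i}|\theta)}{L(\bm{\xi}_{k,i}|\thetashare)}\Big]+\sum_{\ell\neq k}a_{\ell k}\,\bm{g}_{\ell,i},
\]
with $\bm{g}_{\ell,i}\triangleq\log\big[(1-\bm{\psi}_{\ell,i}(\thetashare))/((H-1)\bm{\psi}_{\ell,i}(\thetashare))\big]$ common to all $\theta\neq\thetashare$. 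Unlike the partial approach, where \eqref{eq:equalmu} forces all non-transmitted beliefs to coincide, the $\theta$-dependent self term now makes them differ, so I would split $\bm{x}_{k,i}(\theta)=\bar{\bm{x}}_{k,i}+\bm{d}_{k,i}(\theta)$ with $\bar{\bm{x}}_{k,i}\triangleq\frac{1}{H-1}\sum_{\theta\neq\thetashare}\bm{x}_{k,i}(\theta)$. The centered terms $\bm{d}_{k,i}(\theta)$ obey exactly the scalar stable recursion \eqref{eq:recursion}, driven by $\tfrac{a_{kk}}{H-1}\sum_{\theta'\neq\thetashare}\log\tfrac{L(\bm{\xi}_{k,i}|\theta)}{L(\bm{\xi}_{k,i}|\theta')}$, which Assumption \ref{ass:5} bounds by $M$; hence $\limsup_i|\bm{d}_{k,i}(\theta)|\le a_{kk}M/(1-a_{kk})<\infty$, and it suffices to show $\limsup_i \tfrac{1}{i}\bar{\bm{x}}_{k,i}<0$ a.s.

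Averaging the recursion over $\theta\neq\thetashare$ gives $\bar{\bm{x}}_{k,i}=a_{kk}\bar{\bm{x}}_{k,i-1}+a_{kk}\bm{b}_{k,i}+\sum_{\ell\neq k}a_{\ell k}\bm{g}_{\ell,i}$ with $\bm{b}_{k,i}\triangleq\tfrac{1}{H-1}\sum_{\theta\neq\thetashare}\log\tfrac{L(\bm{\xi}_{k,i}|\theta)}{L(\bm{\xi}_{k,i}|\thetashare)}$. The crucial step is to rewrite $\bm{g}_{\ell,i}$: expanding $(1-\bm{\psi}_{\ell,i}(\thetashare))/((H-1)\bm{\psi}_{\ell,i}(\thetashare))=\tfrac{1}{H-1}\sum_{\tau\neq\thetashare}\exp\{\bm{x}_{\ell,i-1}(\tau)+\log\tfrac{L(\bm{\xi}_{\ell,i}|\tau)}{L(\bm{\xi}_{\ell,i}|\thetashare)}\}$, substituting $\bm{x}_{\ell,i-1}(\tau)=\bar{\bm{x}}_{\ell,i-1}+\bm{d}_{\ell,i-1}(\tau)$, and using a log-sum-exp perturbation bound, one obtains
\[
\bm{g}_{\ell,i}=\bar{\bm{x}}_{\ell,i-1}+\log\tfrac{P(\bm{\xi}_{\ell,i}|\thetashare^c)}{L(\bm{\xi}_{\ell,i}|\thetashare)}+\bm{\epsilon}_{\ell,i},\qquad |\bm{\epsilon}_{\ell,i}|\le\max_{\tau\neq\thetashare}|\bm{d}_{\ell,i-1}(\tau)|,
\]
so $\limsup_i|\bm{\epsilon}_{\ell,i}|\le a_{\ell\ell}M/(1-a_{\ell\ell})$. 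Stacking $\bm{z}_i\triangleq(\bar{\bm{x}}_{1,i},\dots,\bar{\bm{x}}_{N,i})\T$ then yields a diffusion recursion $\bm{z}_i=A\T\bm{z}_{i-1}+\bm{r}'_i+\bm{r}''_i$, where $\bm{r}'_i$ has i.i.d.\ integrable entries $a_{kk}\bm{b}_{k,i}+\sum_{\ell\neq k}a_{\ell k}\log\tfrac{P(\bm{\xi}_{\ell,i}|\thetashare^c)}{L(\bm{\xi}_{\ell,i}|\thetashare)}$ and $\bm{r}''_i$ has the uniformly bounded entries $\sum_{\ell\neq k}a_{\ell k}\bm{\epsilon}_{\ell,i}$.

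I would close with the limiting argument of Lemma \ref{lemma:rate} (cf.~\cite{nedic2017fast}): primitivity of $A$ makes $(A\T)^m\to\mathbbm{1}v\T$ geometrically, so $\tfrac1i\bar{\bm{x}}_{k,i}$ forgets $\bm{z}_0$, the $\bm{r}'$-driven part converges a.s.\ by the strong law to $v\T\E[\bm{r}']$, and the bounded $\bm{r}''$-driven part reduces to the time-average of $v\T\bm{r}''_i$. Using $\mathbbm{1}\T v=1$, $\mathbbm{1}\T A=\mathbbm{1}\T$, the identities $\E[\log\tfrac{P(\bm{\xi}|\thetashare^c)}{L(\bm{\xi}|\thetashare)}]=D_{\sf KL}[L(\thetatrue)||L(\thetashare)]-D_{\sf KL}[L(\thetatrue)||P(\thetashare^c)]$ and $\E[\bm{b}_{k,i}]=D_{\sf KL}[L(\thetatrue)||L(\thetashare)]-\tfrac{1}{H-1}\sum_{\tau\neq\thetashare}D_{\sf KL}[L(\thetatrue)||L(\tau)]$, and the Jensen bound $D_{\sf KL}[L(\thetatrue)||P(\thetashare^c)]\le\tfrac{1}{H-1}\sum_{\tau\neq\thetashare}D_{\sf KL}[L(\thetatrue)||L(\tau)]$, the $\bm{r}'$-term is bounded above by $D_{\sf KL}[L(\thetatrue)||L(\thetashare)]-D_{\sf KL}[L(\thetatrue)||P(\thetashare^c)]$, while subadditivity of $\limsup$ bounds the perturbation contribution by $M\sum_{\ell}v_\ell\sum_{n\neq\ell}a_{n\ell}\tfrac{a_{nn}}{1-a_{nn}}$. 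Hence $\limsup_i\tfrac1i\bar{\bm{x}}_{k,i}$ is at most the right-hand side minus the left-hand side of \eqref{eq:loose}, which is negative by hypothesis, so $\bar{\bm{x}}_{k,i}\to-\infty$; since $\bm{d}_{k,i}(\theta)$ remains bounded, $\bm{x}_{k,i}(\theta)\to-\infty$ for all $\theta\neq\thetashare$, i.e.\ $\bm{\mu}_{k,i}(\thetashare)\to1$ a.s.

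I expect the main obstacle to be the neighbor-term step: because the non-transmitted beliefs no longer agree, $\bm{g}_{\ell,i}$ equals $\bar{\bm{x}}_{\ell,i-1}+\log\tfrac{P(\bm{\xi}_{\ell,i}|\thetashare^c)}{L(\bm{\xi}_{\ell,i}|\thetashare)}$ only up to the bounded error $\bm{\epsilon}_{\ell,i}$, whose exact mean is inaccessible; propagating this error through the Perron-weighted network sum — and only being able to bound, not compute, its time-average limit — is precisely what forces Assumption \ref{ass:5} and produces the slack term in \eqref{eq:loose}, in contrast to the tight condition \eqref{eq:cond1} of Theorem \ref{the:belcoll}.
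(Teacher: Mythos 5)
The paper omits its own proof of this lemma, but your argument is correct and evidently reconstructs the intended one: the split of $\log\big(\bm{\mu}_{k,i}(\theta)/\bm{\mu}_{k,i}(\thetashare)\big)$ into the network-averaged component $\bar{\bm{x}}_{k,i}$ plus the geometrically stable centered recursion (which is exactly \eqref{eq:recursion}), the log-sum-exp perturbation bound $|\bm{\epsilon}_{\ell,i}|\le\max_{\tau\neq\thetashare}|\bm{d}_{\ell,i-1}(\tau)|$ with $\limsup_i|\bm{d}_{\ell,i}(\tau)|\le a_{\ell\ell}M/(1-a_{\ell\ell})$, the Jensen bound $D_{\sf KL}[L(\thetatrue)||P(\thetashare^c)]\le\frac{1}{H-1}\sum_{\tau\neq\thetashare}D_{\sf KL}[L(\thetatrue)||L(\tau)]$ for the self term, and the Perron-weighted time average together reproduce precisely the slack term $M\sum_{\ell}v_\ell\sum_{n\neq\ell}a_{n\ell}\,a_{nn}/(1-a_{nn})$ in \eqref{eq:loose}, so the condition yields $\limsup_i \frac{1}{i}\bar{\bm{x}}_{k,i}<0$ a.s.\ and hence $\bm{\mu}_{k,i}(\thetashare)\to 1$. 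The only steps you leave implicit --- uniform boundedness of the transient of $\bm{d}_{k,i}$ (which needs Assumption \ref{ass:2} and $a_{kk}<1$, guaranteed by strong connectivity), integrability of $\log\big(P(\bm{\xi}|\thetashare^c)/L(\bm{\xi}|\thetashare)\big)$, and the ergodic limiting argument of \cite{nedic2017fast} for the i.i.d.\ driving term --- are at the same level of detail the paper itself adopts in the proof of Lemma \ref{lemma:rate}, so no genuine gap remains.
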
\vspace{-5pt}
Lemma \ref{lem:sufcond2} reveals that, for some sufficiently small choice of self-awareness weights $a_{n n}$ for every $n=1,2,\dots,N$, as long as $D_{\sf KL}[L(\thetatrue)||P(\thetashare^c)] > D_{\sf KL}[L(\thetatrue)||L(\thetashare)]$ holds, the belief concerning the transmitted hypothesis $\thetashare$ will converge to one, while all others will go to zero. The sufficient condition in \eqref{eq:loose} is expected to be not tight because Lemma \ref{lem:sufcond2} relies on the bounds in \eqref{eq:boundl}, which are usually not tight. This condition is nonetheless useful since it highlights the possibility that mislearning occurs.
\vspace{-5pt}
\section{Simulation results}\label{sec:simres}
\vspace{-5pt}
Consider a network, whose topology can be seen in Fig. \ref{fig:network} (self-loops are present at every node and are not displayed) and whose combination matrix $A$ is determined using a parametrized averaging rule \cite{sayed2014diffusion}:
\begin{equation}
a_{\ell k} = \begin{cases}
	 \displaystyle{\lambda},&\text{if }\ell = k\\
	\displaystyle{\frac{1-\lambda}{n_k-1}}, &\text{if }\ell \neq k \text{ and } \ell \in \mathcal{N}_k\\
	0,&\text{ otherwise},
\end{cases}\label{eq:matrix}
\end{equation}
where $\lambda$ is a hyperparameter used to tune self-awareness and $n_k$ is the degree of agent $k$ (including $k$ itself).
The set of hypotheses is given by $\Theta=\{1,2,3\}$, and the true state of nature corresponds to the first hypothesis, i.e., $\thetatrue=1$. 
\vspace{-10pt}
\setcounter{figure}{0}
\begin{figure}[htb]
	\centering
	\includegraphics[width=.8\linewidth]{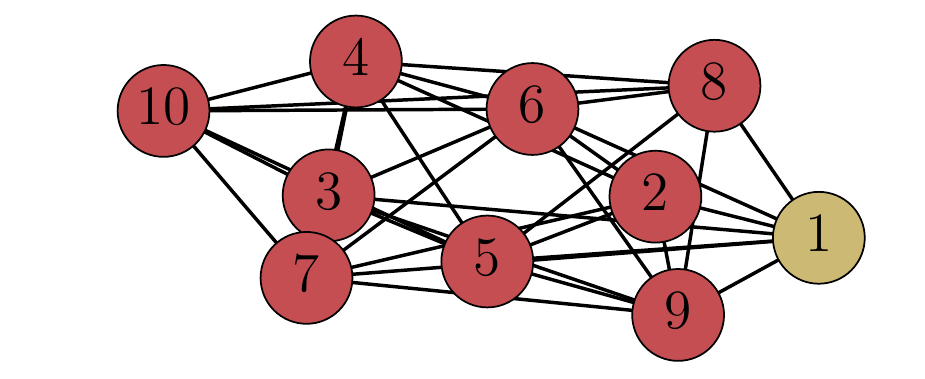}
	\caption{Network topology, where agent 1 is highlighted. }
	\label{fig:network}
\end{figure}
\setcounter{figure}{2}
\begin{figure*}
	\includegraphics[width=\textwidth]{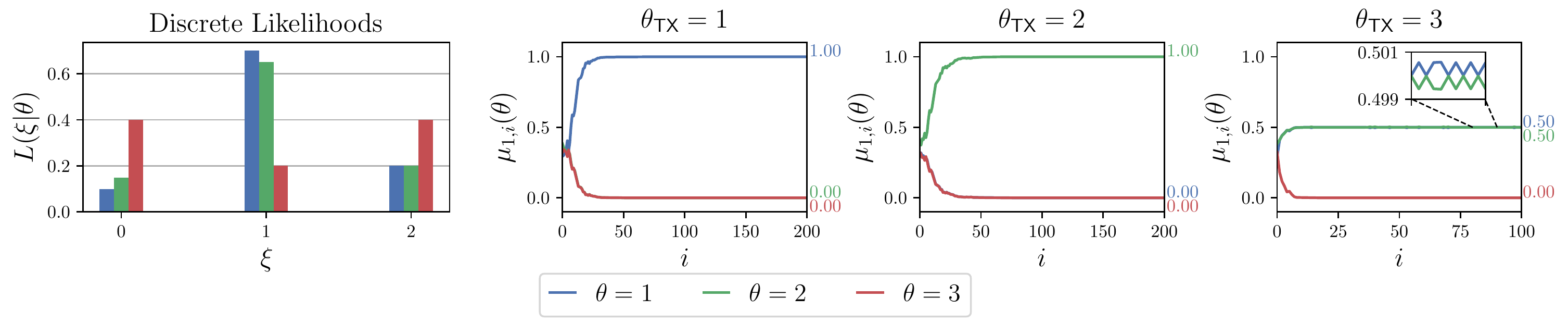}
	\caption{\emph{Self-aware partial approach:} Belief evolution for agent 1 for different choices of $\thetashare$.}\vspace{-10pt}
	\label{fig:sa}	
\end{figure*}
\vspace{-10pt}
\subsection{Partial approach}
We first consider the partial approach, with hyperparameter $\lambda = 0.5$. Assume the same family of unit-variance Gaussian likelihood functions across all agents, whose means are chosen as $0$, $0.2$ and $1$ for $\theta=1$, $\theta=2$ and $\theta=3$ respectively, as seen in the left part of  Fig. \ref{fig:na}. We will focus on the behavior of agent $1$ for illustration purposes. As predicted by Theorem \ref{the:belcoll}, when $\thetashare=\thetatrue=1$, we observe that the agent's belief collapses with full confidence to the true hypothesis as seen in Fig. \ref{fig:na}. For the other cases, when $\thetashare=2$, since $D_{\sf KL}[L(\thetatrue)||L(\thetashare)]- D_{\sf KL}[L(\thetatrue)||P(\thetashare^c)]=-0.091<0$, the agent's belief vector converges with full confidence to the shared hypothesis. Alternatively, when $\thetashare=3$, since $D_{\sf KL}[L(\thetatrue)||L(\thetashare)]- D_{\sf KL}[L(\thetatrue)||P(\thetashare^c)]=0.494>0$, the agent's belief is equally split among all hypotheses different than the shared one.
\vspace{-5pt}
\subsection{Self-aware partial approach}
For this approach, we consider $\lambda = 0.03$ and a family of likelihood functions with discrete support over $\xi$, i.e. $\xi\in\{0,1,2\}$, as seen in the left panel of Fig. \ref{fig:sa}. In this scenario, Assumption \ref{ass:5} is satisfied, and, more particularly, the constant $M$ from \eqref{eq:boundl} is equal to $3.5$.
\begin{remark}[Parametric average combination matrix]\label{rem:2}
	If matrix $A$ is a left-stochastic matrix, whose weights follow an averaging rule such as in \eqref{eq:matrix}, then $\alpha =1$.\QED
\end{remark}
When $\thetashare = \thetatrue =1$, illustrating Theorem \ref{th:truthselfaware}, we observe almost sure convergence to the true hypothesis, as seen in Fig. \ref{fig:sa}. Some interesting phenomena arise from observing the evolution of beliefs when $\thetashare \neq \thetatrue$. 

When $\thetashare = 2$, $D_{\sf KL}[L(\thetatrue)||P(\thetashare^c)] - D_{\sf KL}[L(\thetatrue)||L(\thetashare)]
+
M\sum_{\ell =1}^Nv_{\ell}\sum_{n=1,n\neq \ell}^N a_{n \ell}\frac{a_{n n}}{1-a_{nn}} = 0.02>0$ and therefore Lemma \ref{lem:sufcond2} is satisfied. In simulation, for larger values of $\lambda$, even though the previous condition no longer holds, we still observe that $\lim_{i \rightarrow \infty}\bm{\mu}_{k,i}(\thetashare)=1$, which is due to the fact that Lemma \ref{lem:sufcond2} only establishes a sufficient condition. Another interesting observation from the simulation is that when $\lambda$ is sufficiently large, i.e., the self-awareness term is dominant, we are able to recover the alternative behavior for which $\lim_{i \rightarrow \infty}\bm{\mu}_{k,i}(\thetashare)=0$.

When $\thetashare = 3$, we observe that $D_{\sf KL}[L(\theta_0)|| L(\theta_{\sf TX})] -\frac{1}{H-1}\sum_{\tau\neq \thetashare}D_{\sf KL}[L(\theta_0)||L(\tau) ]=0.094>0$
and therefore, considering Remark \ref{rem:2}, Lemma \ref{lem:sufcond0} holds and no choice of $\lambda$ results in a change of regime, i.e., $\bm{\mu}_{k,i}(\thetashare)$ will always converge to zero almost surely. This behavior is seen in Fig. \ref{fig:na}, where the beliefs concerning hypotheses $\theta =1$ and $\theta =2$ oscillate indefinitely, according to recursion \eqref{eq:recursion}.

In this case, the self-awareness parameter $\lambda$ will influence the behavior of the beliefs concerning the non-transmitted hypotheses, more specifically, it will influence the amplitude of their oscillatory behavior. In Fig. \ref{fig:oscill}, we see how increasing $\lambda$ makes the oscillatory behavior of \eqref{eq:recursion} increase.
\begin{figure}[htb]
	\centering
	\includegraphics[width=\linewidth]{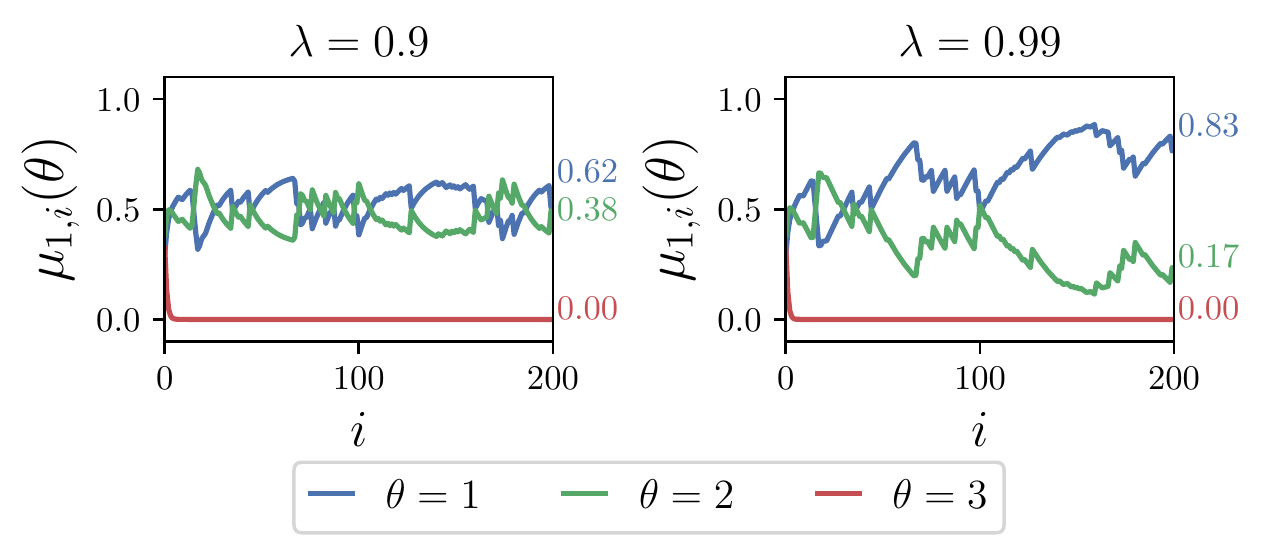}
	\caption{Belief evolution for agent 1 with $\thetashare=2$. \emph{Left:} For $\lambda=0.9$. \emph{Right:} For $\lambda=0.99$.}\label{fig:oscill}\vspace{-5pt}
\end{figure} 
Besides, since the problem is identifiable at every agent, a growing self-awareness parameter brings the agents closer to the truth. Note that in the limit, when $\lambda\rightarrow 1$, we obtain a non-cooperative learning framework.
\vspace{-5pt}
\subsection{Concluding remarks and a third approach}
In this work, we discussed two social learning algorithms that incorporate the notion of partial information sharing. Both algorithms rely on the fact that agents choose deterministically to share one hypothesis from among the available ones. Another interesting approach would be to consider instead the following intermediate step: 

\begin{equation}
\widehat{\bm{\psi}}_{k,i}(\theta)=
\begin{cases}
\bm{\psi}_{k,i}(\theta),&\textnormal{for }\theta=\bm{\theta}^{k,i}_{\sf MAX},\\
\displaystyle{\frac{1-\bm{\psi}_{k,i}(\bm{\theta}^{k,i}_{\sf MAX})}{H-1}},&\textnormal{for }\theta\neq\bm{\theta}^{k,i}_{\sf MAX},
\end{cases}\label{eq:thetamax}
\end{equation} 
where we defined the random variable $\bm{\theta}^{k,i}_{\sf MAX}\triangleq \mathrm{argmax}_{\theta}(\bm{\psi}_{k,i}(\theta))$ as the most relevant hypothesis for agent $k$ at instant $i$. The behavior of the algorithm without self-awareness, considering \eqref{eq:thetamax} as intermediate step can be seen in Fig. \ref{fig:max} for uniform initial beliefs and for random initializations over 100 Monte Carlo runs.
\vspace{-5pt}
\begin{figure}[htb]
	\centering
	\includegraphics[width=\linewidth]{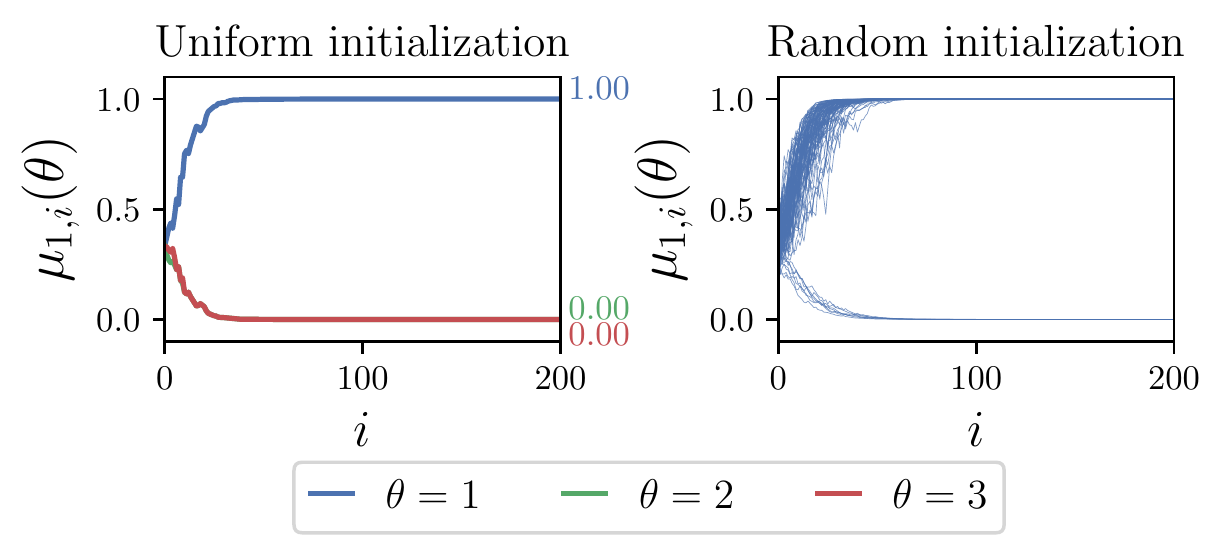}
	\caption{Social learning with most relevant hypothesis sharing. \emph{Left:} Uniform initialization. \emph{Right:} Random initialization with 100 Monte Carlo experiments.} \label{fig:max}
\end{figure}
\vspace{-5pt}

Observing the convergence results in Fig. \ref{fig:max}, the following remarkable behavior emerges. First (left of Fig. \ref{fig:max}), under uniform initialization, partial information with max-belief sharing can be enough to learn properly the true hypothesis. 
Moreover, under random initialization (right of Fig. \ref{fig:max}), sometimes mislearning arises. We have verified on the data that, notably, mislearning is associated to realizations with poor initialization. In these realizations, $\bm{\theta}^{k,i}_{\sf MAX}$ tends to be equal to $\theta=2$ in the first iterations, whose behavior matches the case seen in \eqref{eq:cond1} of Theorem \ref{the:belcoll}. 

\vfill\pagebreak
\balance

\bibliographystyle{IEEEbib}
\bibliography{../biblio}{}

\end{document}